\begin{document}
\mainmatter              
\title{Performance of Modified Fractional Frequency Reuse algorithm in Random Ultra Dense Networks}
\titlerunning{Performance of Modified FFR algorithm in Random Ultra Dense Networks}  
%
\author{Bach Hung Luu\inst{1}$^{[0009-0009-1908-3882]}$, Samuel Harry Gardner\inst{2}$^{[0009-0003-4049-5861]}$, Sinh Cong Lam \inst{1}$^{[0000-0003-4546-3378]}$ \and
	Trong Minh Hoang\inst{3*}$^{[0000-0001-8486-2940]}$ }
\authorrunning{Bach Hung Luu et al.}
%
\institute{Faculty of Electronics and Telecommunications, VNU University of Engineering and Technology, Hanoi, Vietnam 
	\and 
	Department of Computer Science and Engineering, Sejong University, Seoul, South Korea 
	\and
	Telecommunication Faculty No1, Posts and Telecommunications Institute of Technology, Hanoi, Vietnam
	\email{hoangtrongminh@ptit.edu.vn}}

\maketitle              

\begin{abstract}
Mitigating intercell interference by employing fractional frequency reuse algorithms is one of the important approaches to improving user performance in 5G and Beyond 5G cellular network systems, which typically have a high density of Base Stations (BSs). While most frequency reuse algorithms are based on the downlink Signal-to-Interference-plus-Noise Ratio (SINR) or the distance between the user and its serving BS to classify Cell-Edge Users (CEUs) and Cell-Center Users (CCUs), this paper discusses a modified algorithm that uses the power ratio between the signal strengths from the serving BS and the second nearest BS for user classification. Specifically, if the power ratio is below a predefined threshold, the user is classified as a CEU and is served with higher transmission power. Simulation results show that increasing transmission power is necessary to enhance CEU performance, but it also degrades the performance of typical users. The use of frequency reuse algorithms is particularly feasible in environments with a high density of obstacles, where intercell interference can be effectively suppressed.
\keywords{Fractional Frequency Reuse, 5G and Beyond 5G, Coverage Probability, Poisson Point Process}
\end{abstract}
\section{Introduction}
In the 5G and Beyond 5G cellular system, the BSs tend to be deployed very close together to provide the coverage probability for a massive number of users in complex service areas. Due to the limitation of radio spectrum, the re-utilization of frequency bands between BSs is mandatory to fulfill spectrum requirement and secure the acceptable performance of all BSs as well as their associated users \cite{6Gsurveydensi,ICIC6G}. To mitigate the intercell interference because of spectrum sharing, the Interference Coordination Technique (ICIC) was introduced for the predecessor of cellular systems, such as 4G Long-Term Evolution. By employing ICIC technique, the BSs utilize different power levels to serve their associated users to reduce total power consumption and then reduce intercell interference. 

Research on Inter-Cell Interference Coordination (ICIC) techniques has been conducted across different generations of cellular networks, both with and without advanced enhancements \cite{ICIISAC,ICICAI}. Most existing works on ICIC focus on the Fractional Frequency Reuse (FFR) algorithm, which defines an appropriate reuse pattern for a group of base stations (BSs). Traditionally, this algorithm relies on key criteria to classify users into different groups. In \cite{FFRPhycom,6047548}, the downlink Signal-to-Interference-plus-Noise Ratio (SINR) is used to distinguish users across groups. In other studies \cite{distanceFFR,FFRdistance}, users are grouped based on the distance between the users and their serving BS. In these approaches, users who are farther than a reference distance or experience a high downlink SINR are defined as CCUs. Otherwise, users who are closer or have a relatively lower SINR are categorized as CEUs.

Although SINR-based and distance-based methods are two common approaches for implementing FFR techniques, they possess inherent limitations that need to be addressed. In particular, SINR is a random variable that depends on various factors such as instantaneous channel conditions and the resource allocation status at adjacent base stations. Consequently, the SINR-based approach requires continuous, real-time user classification, which can lead to instability in network operation. Meanwhile, the distance-based approach relies on highly accurate user location estimation, which may be infeasible in certain scenarios. Moreover, near users may occasionally experience lower SINR and require higher transmission power than far users. 

In cellular networks with nearest association schemes, the interference of a typical user mostly originates from the second nearest BS.  Therefore,  the paper proposes a new approach to distinguish CCU and CEU that can overcome the limitations of SINR-based and distance-based approaches. Particularly, the typical user estimates the ratio between the signal strength from its serving BS and the second nearest BS. If the ratio is smaller than the pre-defined threshold, the user is called CCU. In the practical networks, besides the desired signal from the serving BS, the user is always required to estimate the signal from the second nearest BS for advanced purposes such as handover and cooperative communication. Thus, this modified algorithm can be deployed in practical networks. The simulation results for random cellular networks illustrate that:
\begin{itemize}
	\item An increase in transmission power ratio of CCU and CEU improves the CEU performance, but it may result in a decline in coverage probability of the typical user.
	\item Since the impact of intercell interference can be mitigated by the obstacles, the use of a complicated frequency reuse algorithm in the bad channel condition is unnecessary.
\end{itemize}

\section{System model}
In this paper, a cellular network with a high density of base stations (BSs) is studied to evaluate the coverage probability in an indoor environment with the presence of obstacles. To fully cover the network service area, the BSs are assumed to be randomly distributed in two dimensions and are typically modeled using a spatial Poisson Point Process (PPP) with density $\lambda$ (BS/km²). Without loss of generality, the typical user is assumed to be located at the origin and has a connection to the nearest BS. Thus, the joint Probability Density Function (PDF) of the distance from the typical user to its nearest $r_1$ and second nearest BSs $r_2$ is:

\begin{align}
	f = (2\pi\lambda)^2 r_1 r_2 \exp(-\pi\lambda r^2)
	\label{f}
\end{align}

\subsection{Modified Fractional Frequency Reuse}
\label{algorithmdis}

The broadcast channel is affected by interference from adjacent BSs, excluding the nearest and second nearest ones. Let $I$ be the total intercell interference of the typical user on the broadcast channel, the received signal from the nearest and second nearest BSs on this channel are respectively:
\begin{align}
	S_1 = \frac{Pg_1 L(r_1)}{I + \sigma^2} \text{ and } S_2 = \frac{Pg_2 L(r_2)}{I + \sigma^2}
\end{align}
Where $P$ is the transmission power of BSs on the broadcast channel; $L(r_1)$ and $L(r_2)$ are the path loss from the typical user to the nearest and second nearest BSs; $g_1$ and $g_2$ are corresponding instantaneous channel power gains; $\sigma^2$ is the Gaussian power.

Under the Modified Fractional Frequency Reuse discipline, a user is defined as a CEU if the received signal power from the nearest (serving) BS is smaller than a specific signal threshold compared to the signal power received from the second nearest BS. Let $T$ as the signal threshold, then the user is called CEU if
\begin{align}
	\frac{S_1}{S_2} < T \text{ or }\frac{Pg_1 L(r_1)}{I + \sigma^2} / \frac{Pg_2 L(r_2)}{I + \sigma^2} < T
\end{align}
or 
\begin{align}
	\frac{g_1L(r_1)}{g_2L(r_2)}<T
\end{align}

\begin{proposition}
	For a given network scenario, the CCU and CEU classification probability are
	\begin{align}
		p_e = \frac{TL(r_2)}{1+TL(r_2)}
	\end{align}	
	and $$p_c = 1-p_e=\frac{L(r_1)}{1+TL(r_2)}$$.
\end{proposition}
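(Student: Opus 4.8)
The plan is to reduce the classification event to a statement about the two fading gains alone. The phrase \emph{for a given network scenario} I would read as conditioning on the point process, i.e. fixing the distances $r_1$ and $r_2$ and hence the path losses $L(r_1)$ and $L(r_2)$. The first thing to notice is that the common factor $I+\sigma^2$ cancels in the ratio $S_1/S_2$, so the CEU criterion collapses to $\frac{g_1 L(r_1)}{g_2 L(r_2)} < T$ with no interference term surviving. The only remaining randomness is therefore carried by the instantaneous channel power gains $g_1$ and $g_2$.

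Next I would invoke the standard Rayleigh-fading assumption that $g_1$ and $g_2$ are independent and exponentially distributed with unit mean. Rearranging the criterion gives the equivalent event $g_1 < a\,g_2$ with $a = \frac{T L(r_2)}{L(r_1)}$, so that $p_e = \Pr(g_1 < a g_2)$. The core computation is then to condition on $g_2 = y$ and integrate against its density:
\begin{align}
	p_e = \int_0^\infty \Pr(g_1 < a y)\, e^{-y}\, dy = \int_0^\infty \left(1 - e^{-ay}\right) e^{-y}\, dy = \frac{a}{1+a}.
\end{align}
Substituting $a = T L(r_2)/L(r_1)$ back in yields $p_e = \frac{T L(r_2)}{L(r_1)+T L(r_2)}$, and then $p_c = 1-p_e = \frac{L(r_1)}{L(r_1)+T L(r_2)}$.

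The integral itself is elementary, so the real care lies in the setup rather than in any hard estimate. The main point I would want to pin down is the precise meaning of \emph{for a given network scenario}: I take it to mean that we condition on the geometry so that only the fading is random, and I would explicitly justify that $I$ drops out because it is common to $S_1$ and $S_2$. I would also flag a normalization subtlety: the natural derivation places $L(r_1)+T L(r_2)$ in the denominator, which is exactly consistent with the stated identity $p_c = 1-p_e$, whereas reading the displayed $p_e$ literally (with $1$ where I obtain $L(r_1)$) would require the implicit convention $L(r_1)=1$. Reconciling these two forms is the one step I would settle before writing out the final proof.
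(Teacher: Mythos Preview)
Your proposal is correct and matches the paper's own argument almost step for step: condition on the geometry, reduce the CEU event to $g_1 < g_2\,T L(r_2)/L(r_1)$, and use the unit-mean exponential assumption on $g_1,g_2$ to obtain $p_e = \frac{T L(r_2)}{L(r_1)+T L(r_2)}$. The only cosmetic difference is that the paper writes the inner step as $p_e = 1 - \mathbf{E}[\exp(-g_2\,T L(r_2)/L(r_1))]$ and then invokes $\mathbf{E}[e^{-gs}]=\frac{1}{1+s}$, whereas you expand the same expectation as an explicit integral; these are the same computation. Your flag about the denominator is also on target: the paper's proof itself lands on $L(r_1)+T L(r_2)$ in the denominator, so the ``$1$'' in the displayed proposition is a typo rather than a normalization you need to reconcile.
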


\begin{proof}
	The probability that the user is called CEU or CEU classification probability is 
	\begin{align}
		p_e& = \mathbf{P}\left(\frac{g_1L(r_1)}{g_2L(r_2)}<T\right) \nonumber\\
		&= \mathbf{P}\left(g_1<g_2\frac{TL(r_2)}{L(r_1)}\right)
	\end{align}
	Under assumption that the channel power gain has an negative exponential distribution with the mean of 1, the CEU classification probability is obtained by
	\begin{align}
		p_e = 1-\mathbf{E}\left[\exp\left(-g_2\frac{TL(r_2)}{L(r_1)}\right)\right]
	\end{align}
	Since $g_2$ is also an exponential random variable, $E[e^{-gs}]=\frac{1}{1+s}$. Hence,
	\begin{align}
		p_e = 1-\frac{L(r_1)}{L(r_1)+TL(r_2)} = \frac{TL(r_2)}{L(r_1)+TL(r_2)}
	\end{align}	
	The proposition is proved.
\end{proof}

With the assumption that the BS utilizes a frequency band to serve one user at a specific time slot, the number of BSs that simultaneously transmit on the CEU power level is exactly the number of CEUs. Thus, the density of BSs that simultaneously transmit on CEU and CCU power levels is respectively:
\begin{align}
	\lambda_e = \lambda p_e \text{ and } \lambda_c = \lambda p_c
	\label{BSdens}
\end{align}

\begin{proposition}
	The CEU classification probability over all network scenarios is
	\begin{align}
		\overline{p}_e = \int_{0}^{\infty}\int_{r_1}^{\infty}\frac{TL(r_2)}{L(r_1)+T*L(r_2)}f(r_1,r_2)dr_2dr_1
		\label{overPe}
	\end{align}
	and CCU classification probability is $\overline{p}_c = 1-\overline{p}_e$.
\end{proposition}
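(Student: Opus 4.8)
The plan is to obtain $\overline{p}_e$ by averaging the conditional classification probability from Proposition~1 over the random geometry of the network. Proposition~1 already establishes that, \emph{conditioned} on a fixed realization of the nearest and second-nearest distances $r_1$ and $r_2$, the probability that the typical user is classified as a CEU equals $p_e(r_1,r_2) = \frac{TL(r_2)}{L(r_1)+TL(r_2)}$. The only randomness left to integrate out is therefore the geometric one, which is fully encoded in the joint PDF $f(r_1,r_2)$ given in~\eqref{f}.

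First I would invoke the law of total probability, equivalently the tower property of conditional expectation, to write
\begin{align}
	\overline{p}_e = \mathbf{E}_{r_1,r_2}\!\left[\,p_e(r_1,r_2)\,\right] = \mathbf{E}_{r_1,r_2}\!\left[\frac{TL(r_2)}{L(r_1)+TL(r_2)}\right].
\end{align}
Expanding this expectation against the joint density $f(r_1,r_2)$ converts it into a double integral over the support of $(r_1,r_2)$, which is precisely the iterated integral asserted in~\eqref{overPe}.

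The single point that requires care is the domain of integration. Because $r_1$ is the distance to the nearest BS and $r_2$ the distance to the second nearest, the pair is supported on $\{0 \le r_1 \le r_2 < \infty\}$ rather than on the full positive quadrant; integrating $r_1$ over $(0,\infty)$ and, for each fixed $r_1$, integrating $r_2$ over $(r_1,\infty)$ enforces this ordering constraint and reproduces the stated limits. Finally, since the CCU and CEU events partition the sample space for \emph{every} network realization, the identity $p_c = 1 - p_e$ holds pointwise; taking expectations preserves it, giving $\overline{p}_c = 1 - \overline{p}_e$ at once. I do not anticipate any genuine analytical obstacle here—the substantive computation was carried out in Proposition~1—so the argument reduces to a correct application of total expectation together with careful bookkeeping of the integration limits imposed by $r_1 \le r_2$.
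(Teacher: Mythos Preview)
Your proposal is correct and mirrors the paper's own argument: the paper simply states that $\overline{p}_e$ is obtained by taking the expectation of $p_e$ from Proposition~1 with respect to $r_1$ and $r_2$ using the joint PDF~\eqref{f}, together with the constraint $r_2>r_1$. Your write-up is in fact more explicit about the tower property and the integration domain than the paper's one-line proof, but the approach is identical.
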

\begin{proof}
	The CEU classification probability over all network scenarios is given by the expectation in $p_e$ with respect to $r_1$ and $r_2$ whose joint PDFs are from Equations \ref{f}, with the reminder that $r_2>r_1$, we obtain the CEU classification probability as Equation \ref{overPe}.
\end{proof}

\begin{proposition}
	Average transmission power of BS is used to serve the typical user is
	\begin{align}
		\overline{P} = P(p_c+\phi p_e)
	\end{align}	
\end{proposition}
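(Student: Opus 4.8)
The plan is to obtain $\overline{P}$ as the expectation of the (random) transmission power over the user's classification outcome, via the law of total expectation. First I would pin down the power-allocation convention implicit in the statement: the serving BS uses the baseline broadcast power $P$ when the typical user is a CCU, and uses the boosted power $\phi P$ when the user is a CEU, so that $\phi$ is precisely the CEU-to-CCU transmission power ratio. Under this convention, the power allocated to the typical user is a discrete random variable taking the value $P$ on the event $\{\text{user is CCU}\}$ and the value $\phi P$ on the event $\{\text{user is CEU}\}$.

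Next I would fix the geometry $(r_1,r_2)$ and average over the channel-gain randomness that drives the classification. By the first proposition, the classification event depends on the instantaneous gains only through the ratio $\frac{g_1 L(r_1)}{g_2 L(r_2)}$, and the probabilities of the two outcomes are exactly $p_c$ and $p_e$. Since the events $\{\text{CCU}\}$ and $\{\text{CEU}\}$ are complementary and exhaustive, the expected transmission power is the weighted sum $\overline{P} = P\,\mathbf{P}(\text{CCU}) + \phi P\,\mathbf{P}(\text{CEU}) = P p_c + \phi P p_e$, and factoring out $P$ yields $\overline{P} = P(p_c + \phi p_e)$, as claimed.

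I do not anticipate any analytic obstacle here: once the allocation convention is made explicit, the result is an immediate application of total expectation, with all of the probabilistic content already packaged into $p_c$ and $p_e$ by the first proposition. The only point requiring care is conceptual rather than computational, namely to make explicit that the \emph{average} is taken over the channel-induced classification randomness for a fixed network scenario $(r_1,r_2)$, so that $p_c$ and $p_e$ (rather than their scenario-averaged counterparts $\overline{p}_c,\overline{p}_e$) are the correct weights. Should the intended quantity instead be the power averaged over all scenarios, the identical argument applies verbatim after replacing $p_c,p_e$ by $\overline{p}_c,\overline{p}_e$.
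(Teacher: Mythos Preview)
Your argument is correct. The paper actually states this proposition without any accompanying proof, so there is nothing to compare against; the law-of-total-expectation computation you outline is the evident intended justification. One cosmetic note: the paper's notation is internally inconsistent, since the subsequent subsection on downlink SIR writes the CEU power as $aP$ (with $0<a<1$) rather than $\phi P$, but your identification of $\phi$ as the CEU-to-CCU power ratio is the only sensible reading of the stated formula, and your caveat about $p_c,p_e$ versus $\overline{p}_c,\overline{p}_e$ is well taken given that the paper does not specify which average is meant.
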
	

Figure \ref{fig:ceudens} shows the density of CEUs that increases consistently with the threshold $T$ and $\beta$, reflecting that more users are classified as CEUs under a worse channel condition. However, the rate of increase and absolute values differ significantly across the three values of $\beta$. At $T = 0$ dB, the CEU density is about 0.45 for $\beta = 0.5$, 0.30 for $\beta = 1.5$, and 0.15 for $\beta = 2.0$, showing a 50\% reduction between each case. At $T = 20$ dB, the gap remains substantial: CEU density reaches over 98\% for $\beta = 0.5$, while staying below 70\% for $\beta = 2.0$. These differences highlight the strong influence of the path loss exponent on user classification. A lower $\beta$ leads to higher interference levels and CEU density, while a higher $\beta$ results in stronger signal isolation and fewer CEUs.

\begin{figure}
	\centering
	\includegraphics[width=0.9\linewidth]{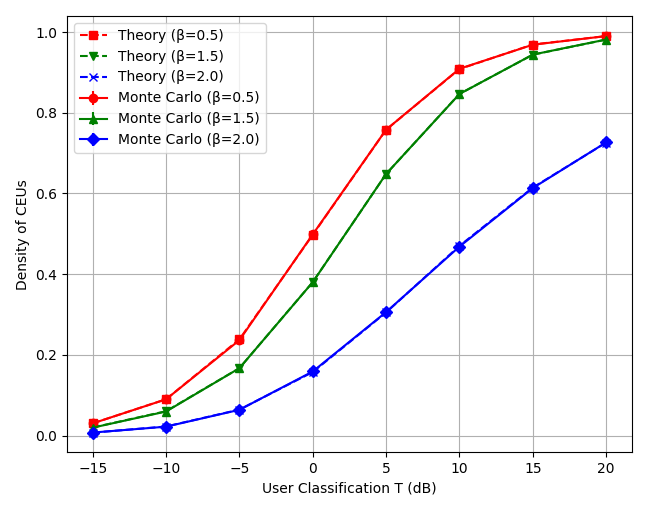}
	\caption{Density of CEUs vs User Classification Threshold T}
	\label{fig:ceudens}
\end{figure}

\subsection{Downlink SIR}
After determining the typical user as either CCU or CEU, the BS selects an appropriate transmission power to serve the typical user. Denote $a$ as the ratio between the serving power of CEU and CCU ($0<a<1$). Specifically, the transmission power of CCU and CEU are $P$ and $ aP$. 
The downlink desired signal of CCU and CEU are
\begin{itemize}
	\item For CCU 
	\begin{align}
		g PL(r_1)|\frac{g_1L(r_1)}{g_2L(r_2)}>T
	\end{align}
	\item For CEU
	\begin{align}
		a g PL(r_1)|\frac{g_1L(r_1)}{g_2L(r_2)}<T
	\end{align}
\end{itemize}

Due to spectrum sharing, the data transmission process between the typical user and its serving BS also experiences interference from other BSs transmitting on the same frequency band, where the densities of BSs transmitting at CCU and CEU power levels are $\lambda p_c$ and $\lambda p_e$, respectively. The total intercell interference of the typical user is
\begin{align}
	I = \sum_{k\in\theta_c}Pg_kL(r_k) + \sum_{k\in\theta_e}Pg_kL(r_k)
	\label{ici}
\end{align}
where $\theta_c$ and $\theta_e$ are set of interfering BSs transmitting on CCU and CEU power levels. 

Consequently, the downlink SIR of the CCU and CEU are written in the following conditional events
\begin{itemize}
	\item For CCU 
	\begin{align}
		SIR_c = \left(\frac{g PL(r_1)}{I}\middle|\frac{g_1L(r_1)}{g_2L(r_2)}>T\right)
		\label{SIRc}
	\end{align}
	\item For CEU
	\begin{align}
		SIR_e = \left(\frac{a g PL(r_1)}{I}\middle|\frac{g_1L(r_1)}{g_2L(r_2)}<T\right)
		\label{SIRe}
	\end{align}
\end{itemize}

\section{Coverage Probability}
In this section, the coverage probability that the typical user is under the BS's coverage area is presented by the conditional probability.
\begin{align}
	\mathcal{P}(\hat{T}) = \mathbf{P}(SIR>\hat{T})
\end{align}
where $\hat{T}$ is called the SIR requirement of the typical user. For CCU and CEU, the SIRs are defined in Equations \ref{SIRc} and \ref{SIRe} respectively. The coverage probability of the typical user is
\begin{align}
	\mathcal{P}(\hat{T}) =& \mathbf{P}(SIR_c>\hat{T})\mathbf{P}\left(\frac{g_1L(r_1)}{g_2L(r_2)}>T\right) \nonumber\\
	&+ \mathbf{P}(SIR_e>\hat{T})\mathbf{P}\left(\frac{g_1L(r_1)}{g_2L(r_2)}<T\right)
\end{align}
By utilizing the Bayes Rules, we obtain
\begin{align}
	\mathcal{P}(\hat{T}) =& \mathbf{P}\left(\frac{g PL(r_1)}{I},\frac{g_1L(r_1)}{g_2L(r_2)}>T\right) \nonumber\\
	&+ \mathbf{P}\left(\frac{a g PL(r_1)}{I},\frac{g_1L(r_1)}{g_2L(r_2)}<T\right)
\end{align}

\section{Performance Analysis}

In this section, the Monte Carlo simulation is derived to examine the dependence of user coverage probability on other network parameters such as transmission conditions, density of BSs, user classification threshold, and so on. The simulation flow chart is summarized as in Figure \ref{fig:simmode}.

\begin{figure}[h]
	\centering
	\includegraphics[width=1\linewidth]{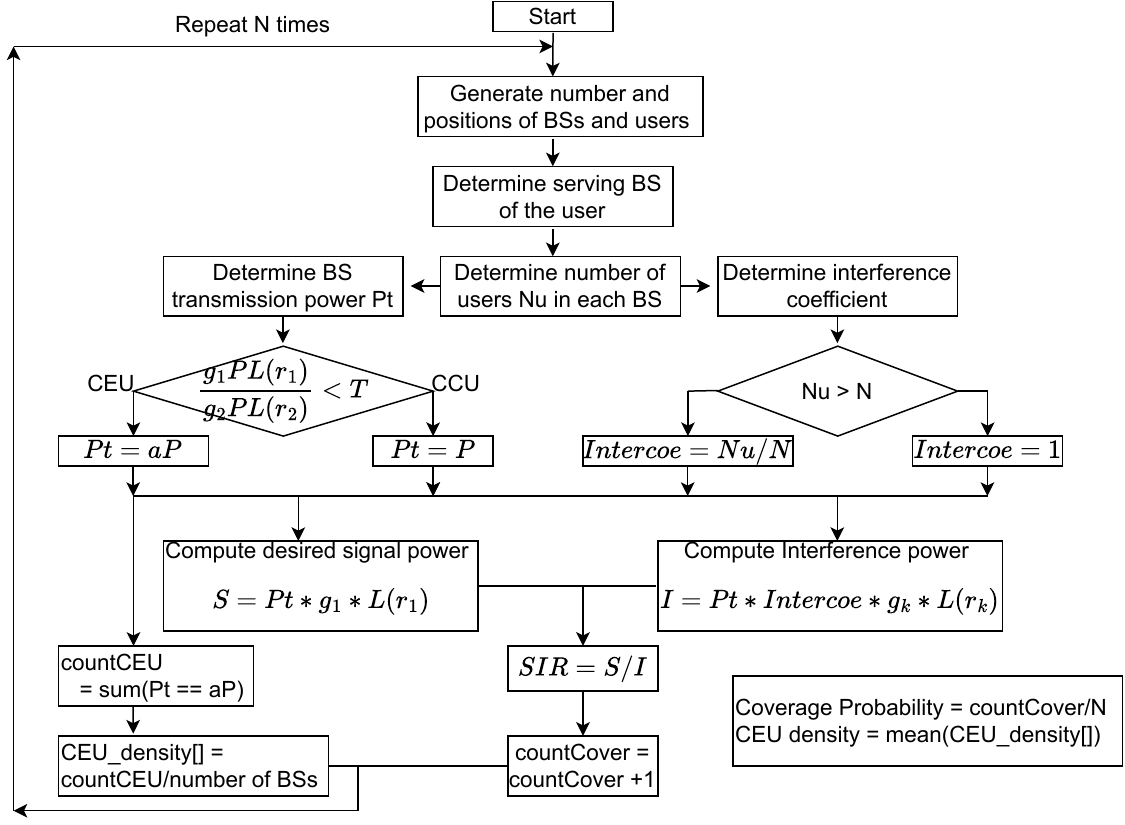}
	\caption{Simulation steps}
	\label{fig:simmode}
\end{figure}

The main steps are described as follows:
\begin{itemize}
	\item At the first steps, the number and position of users and BSs are generated in the 2-D coordinate as the points of two independent PPPs.
	\item After that, the distances between these stations are computed. Based on the nearest association mechanism, the serving BS of each user and then the number of connected users of the BS $N_u$ are determined. 
	\item With the assumption that each frequency sub-band is simultaneously assigned to one user only, if the number of users $N_u$ of the BS is larger than the number of sub-bands $N$, then all sub-bands are occupied and the BS creates interference to the typical user. In contrast, if $N_u<N$, then the BS acts as the interfering source of the typical user with a probability of $\frac{N_u}{N}$. 
	\item To determine the transmission power of the BS on the sub-band of interest, we examine the difference between the received signal strength of the user on that sub-band from its serving BS and its second nearest BS. If this difference is smaller than the classification threshold $T$, then the user is called CCU and the corresponding transmission power on the sub-band is $P$. Otherwise, the transmission power of the BS on that sub-band is $ aP$. Specifically, this step follows the discipline of the frequency reuse algorithm in Section \ref{algorithmdis}.
	\item After determining the Interference Coefficient and BSs' transmission power, the total interfering power of the typical user and then the downlink SIR are computed.  The user coverage probability and density of CEUs are derived by counting the event that $SIR > \hat{T}$ and transmission power of BSs $P_t = aP$.
\end{itemize}

In the following sections, the simulation results are derived for the random cellular network with density of BSs $\lambda = 10^{-2}$ $BS/m^2$ and the density of users $\lambda_u = 10 \lambda$. The number of frequency sub-bands $N=10$; coverage threshold $\hat{T}=-20$ dB. 
\subsection{User Coverage Probability vs Classification Threshold}
Figure \ref{fig:figure1} illustrates the effect of the classification threshold T on user coverage probability. It is worth noting that the nearest association mechanism is employed in the system model, meaning that the user connects to the nearest BS even if the second nearest BS could offer a stronger signal.

\begin{figure}[h]
	\centering
	\includegraphics[width=1\linewidth]{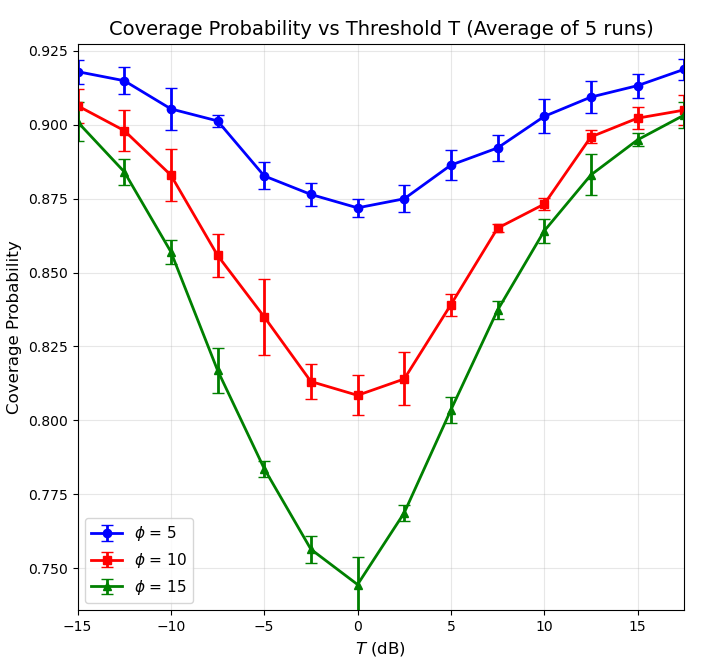}
	\caption{User Coverage Probability vs Classification Threshold}
	\label{fig:figure1}
\end{figure}

As illustrated, the coverage probability exhibits a non-monotonic trend with respect to \( T \), forming a V-shaped curve. When \( T \) increases from \(-15~\mathrm{dB}\) to \(0~\mathrm{dB}\), the coverage probability significantly declines. For instance, when \( a = 10 \), the probability decreases from approximately \(0.905\) at \(T = -15~\mathrm{dB}\) to around \(0.808\) at \(T = 0~\mathrm{dB}\), corresponding to a reduction of nearly \(10.7\%\). This decline results from more users being classified for high-power transmission, increasing not only their signal power but also the overall interference in the network. Since the coverage probability is determined by the signal-to-interference-plus-noise ratio (SIR), an increase in transmission power can be detrimental when the interference rises more rapidly than the desired signal power.

Conversely, as \( T \) continues to increase beyond \(0~\mathrm{dB}\), the coverage probability begins to recover. In the same example with \( a = 10 \), the probability increases from approximately \(0.808\) at \(T = 0~\mathrm{dB}\) to about \(0.902\) at \(T = 15~\mathrm{dB}\), representing a relative gain of roughly \(11.6\%\). This is attributed to the fact that only a smaller subset of users receives high-power transmissions, thereby limiting interference and improving the SIR for others.

Moreover, the effect of the transmission power ratio $a$ is seen in the comparative performance across the three curves. A higher $a$ tends to reduce the coverage probability at the same \( T \) value. For instance, at \( T = 0~\mathrm{dB} \), increasing $a$ from 5 to 15 reduces the coverage probability from approximately \(0.873\) to \(0.744\), which is a substantial drop of around \(14.8\%\). This observation underscores that excessive power scaling may degrade network performance due to elevated interference levels.

\subsection{User Coverage Probability vs Channel Condition}
Figure~\ref{fig:Figure_10} illustrates the variation of the coverage probability as a function of the density of obstacles $\beta$, for three values of obstacle resistance with $ \alpha \in \{0.1, 0.01, 0.001\} $.

\begin{figure}[h]
	\centering
	\includegraphics[width=1\linewidth]{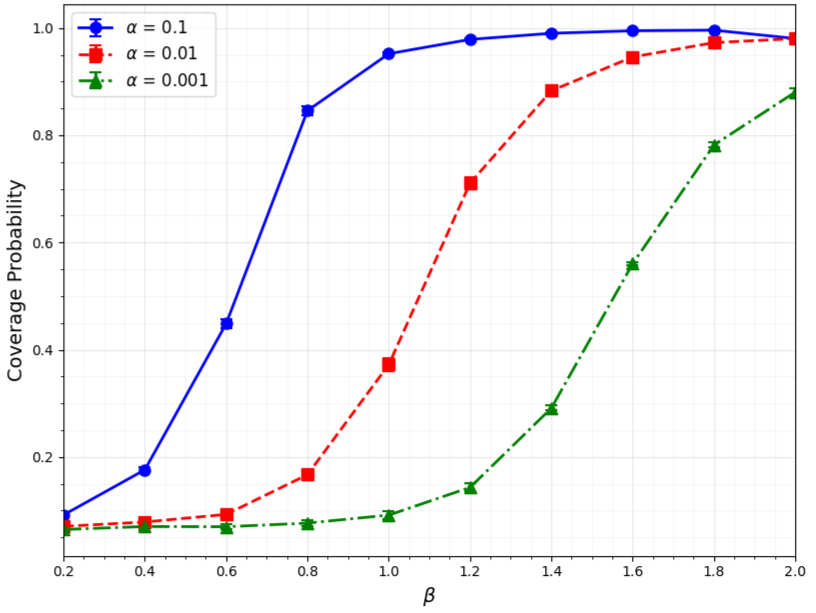}
	\caption{Simulation steps}
	\label{fig:Figure_10}
\end{figure}

Under the stretched path loss model, the path loss increases rapidly with either $\alpha$ or $\beta$, and hence larger values of \( \alpha \) or \( \beta \) imply more aggressive attenuation. This attenuation not only weakens the desired signal but, crucially, also suppresses interference from other adjacent BSs, especially those at greater distances. Therefore, a higher value of $\alpha$ or $\beta$ can result in a higher SIR and consequently a higher coverage probability.

As observed in Figure~\ref{fig:Figure_10}, the coverage probability increases monotonically with \( \beta \) across all considered values of \( \alpha \). For instance, when \( \alpha = 0.1 \), the coverage probability rises sharply from approximately \(0.10\) at \( \beta = 0.2 \) to nearly \(0.99\) at \( \beta = 1.2 \), indicating an increase of nearly \(89\%\). In contrast, for \( \alpha = 0.01 \), the coverage probability reaches a comparable level only at \( \beta \approx 1.9 \).

Additionally, for any fixed \( \beta \), higher values of \( \alpha \) consistently yield better coverage. For example, at \( \beta = 1.2 \), the coverage probability is approximately \(0.99\), \(0.71\), and \(0.15\) for \( \alpha = 0.1 \), \(0.01 \), and \(0.001 \), respectively. This behavior confirms that strong attenuation effectively limits interference, thereby improving SIR and user coverage.

\section{Conclusion}
The paper studies a modified version of Fractional Frequency Reuse for dense networks, in which users are classified based on the power ratio between the signal from the serving base station (BS) and the second nearest BS. The proposed scheme is feasible in practical networks, as users typically observe signals from the second nearest BS for advanced procedures such as handover and cooperative transmission/reception. Simulation results are obtained for random cellular networks with randomly distributed BSs and users. The results illustrate that classifying all users as either CCUs or CEUs can maintain the typical user’s performance, although a higher transmission power ratio can further enhance CEU performance.
\bibliographystyle{splncs03_unsrt} 
\bibliography{refer}

\begin{thebibliography}{1}
\providecommand{\url}[1]{\texttt{#1}}
\providecommand{\urlprefix}{URL }

\bibitem{6Gsurveydensi}
JIANG, W., HAN, B., HABIBI, M.A., SCHOTTEN, H.D.: The road towards 6g: A
  comprehensive survey. IEEE Open Journal of the Communications Society  2,
  334--366 (2021)

\bibitem{ICIC6G}
Zhang, Z., Wu, Y., Lei, X., Lei, L., Wei, Z.: Toward 6g multicell orthogonal
  time frequency space systems: Interference coordination and cooperative
  communications. IEEE Vehicular Technology Magazine  19(1),  55--64 (2024)

\bibitem{ICIISAC}
Niu, Y., Wei, Z., Wang, L., Wu, H., Feng, Z.: Interference management for
  integrated sensing and communication systems: A survey. IEEE Internet of
  Things Journal  12(7),  8110--8134 (2025)

\bibitem{ICICAI}
Lohan, P., Kantarci, B., Amine~Ferrag, M., Tihanyi, N., Shi, Y.: From 5g to 6g
  networks: A survey on ai-based jamming and interference detection and
  mitigation. IEEE Open Journal of the Communications Society  5,  3920--3974
  (2024)

\bibitem{FFRPhycom}
Lam, S.C., Tran, X.N.: Fractional frequency reuse in ultra dense networks.
  Physical Communication  48,  101433 (2021),
  \url{https://www.sciencedirect.com/science/article/pii/S1874490721001701}

\bibitem{6047548}
Novlan, T.D., Ganti, R.K., Ghosh, A., Andrews, J.G.: Analytical evaluation of
  fractional frequency reuse for ofdma cellular networks. IEEE Transactions on
  Wireless Communications  10(12),  4294--4305 (2011)

\bibitem{distanceFFR}
Alam, M.J., Hossain, M.R., Chugh, R., Azad, S.: Distance-based cell range
  extension and almost blank sub-frame for load balancing and interference
  mitigation in 5g and beyond heterogeneous networks. Engineering Reports
  6(5),  e12772 (2024),
  \url{https://onlinelibrary.wiley.com/doi/abs/10.1002/eng2.12772}

\bibitem{FFRdistance}
Wang, S., Zhao, L., Liang, K., Zheng, G., Wong, K.K.: Performance analysis for
  control- and user-plane separation based ran with non-uniformly distributed
  users. IEEE Transactions on Vehicular Technology  73(6),  8125--8140 (2024)

\end{thebibliography}

\end{document}